\newtheorem{prop}{Proposition}
\newtheoremstyle{bfnote}%
{}{}%
{\itshape}{}%
{\bfseries}{.}%
{ }%
{\thmname{#1}\thmnumber{ #2}\thmnote{ (#3)}}
\theoremstyle{bfnote}
\newtheorem{thm}{Theorem}
\newtheorem{asp}{Assumption}
\newtheorem{lem}{Lemma}
\newtheorem{rmk}{Remark}
\newtheorem*{base*}{Base Controller}
\newcommand*\circled[1]{\tikz[baseline=(char.base)]{\node[shape=circle,draw,inner sep=0.05pt] (char) {#1};}}
\newcommand{\real}[0]{\mathbb R}
\DeclareSymbolFont{bbold}{U}{bbold}{m}{n}
\DeclareSymbolFontAlphabet{\mathbbold}{bbold}
\DeclarePairedDelimiterX\Set[2]{\lbrace}{\rbrace}%
{ #1 \,\delimsize| \,\mathopen{} #2 }
\title{\LARGE \bf Leveraging Predictions in Power System Frequency Control: an Adaptive Approach}
\author{Wenqi Cui, Guanya Shi, Yuanyuan Shi and Baosen Zhang
\thanks{Wenqi Cui and Baosen Zhang  are with the Department of Electrical and Computer Engineering, University of Washington Seattle, WA 98195, USA \{wenqicui, zhangbao\}@uw.edu} %
\thanks{Guanya Shi is with the  Paul G. Allen School of Computer Science and Engineering, University of Washington Seattle,  guanyas@cs.washington.edu }
\thanks{Yuanyuan Shi is with the Department of Electrical and Computer Engineering, University of California San Diego, yyshi@eng.ucsd.edu}
\thanks{The authors are partially supported by the NSF grants ECCS-1930605, 2200692, and 2153937.}}
\begin{document}
\maketitle
\thispagestyle{empty}
\pagestyle{empty}

\begin{abstract}
 Ensuring the frequency stability of electric grids with increasing renewable resources is a key problem in power system operations. In recent years, a number of advanced controllers have been designed to optimize frequency control. These controllers, however, almost always assume that the net load in the system remains constant over a sufficiently long time. Given the intermittent and uncertain nature of renewable resources, it is becoming important to explicitly consider net load that is time-varying.    

 This paper proposes an adaptive approach to frequency control in power systems with significant time-varying net load. We leverage the advances in short-term load forecasting, where the net load in the system can be accurately predicted using weather and other features. We integrate these predictions into the design of adaptive controllers, which can be seamlessly combined with most existing controllers including conventional droop control and emerging neural network-based controllers. We prove that the overall control architecture achieves frequency restoration decentralizedly. Case studies verify that the proposed method improves both transient and frequency-restoration performances compared to existing approaches.

\end{abstract}



\section{Introduction}

Frequency control plays a fundamental role in power system operations by maintaining the real-time supply and demand balance~\cite{sauer2017power}. With the increase in renewable resources, frequency stability has received significant interest because of the reduction in mechanical inertia and observed performance degradation~\cite{kroposki2017achieving}. Much of the recent works have focused on designing control algorithms that leverage the power electronic interfaces of renewable resources. Since devices like batteries and solar PVs are connected to the grid through inverters, they can quickly adjust their power to implement sophisticated control laws in response to frequency changes. For example,~\cite{cui2022tps, yuan2022reinforcement, jiang2022stable, cui2022structured, Zhao2015acc, SCHIFFER2017auto, weitenberg2018exponential} have been proposed to design optimal nonlinear controllers that outperform linear droop control. These include neural network-based controllers in~\cite{cui2022tps, yuan2022reinforcement, jiang2022stable, cui2022structured} to overcome the challenges in optimizing over functional space of stabilizing nonlinear controllers. In addition, steady-state frequency restoration can be realized distributedly through a method called distributed averaging-based integral (DAI)~\cite{Zhao2015acc, SCHIFFER2017auto, weitenberg2018exponential,jiang2022stable, cui2022structured}. 

A key assumption made in all of the above works is that there exists a separation of timescales. Namely, after a disturbance occurs--e.g., a change in net load--the system will not experience other disturbances until the frequency has recovered and reached a steady state. However, because of the intermittent and uncertain nature of renewable resources, the assumption that the net load will remain constant for a sufficiently long time until the system settles down may no longer be valid. Thus, it becomes critical to explicitly model the time-varying nature of net load in frequency control.

The changes in net load are typically modeled as noises in the system and there are various ways to deal with them. If the controllers for the noiseless system (i.e., the nominal system) are exponentially stable, then bounded noises will lead to bounded state deviations~\cite{khalil2002nonlinear}. Alternatively, robust controllers can be designed for the ``worst-case'' noise, again assuming the noise is bounded~\cite{bevrani2015robust,tabas2022computationally}. However, as the intermittency and uncertainties grow in the net load, both methods tend to lead to very conservative results. 

In this paper, we leverage the fact that although the net load is time-varying, it is largely predictable. In recent years, short-term net load  forecasting--spanning seconds to hours--has shown impressive performance~\cite{kusiak2010short,wang2017data , charytoniuk1998nonparametric, lu2021ultra, sun2019using}. Using features such as weather data and calendar information, the future net load can be predicted with a high degree of accuracy (see,~e.g.~\cite{kusiak2010short,wang2017data,  sun2019using, lu2021ultra, charytoniuk1998nonparametric} and the references within). Therefore, these predictions could be used in the control problem to mitigate the impact of time-varying net load. Previously, the features have been used in end-to-end learning methods that attempt to learn a mapping directly from feature observations 
to control actions~\cite{su2020adaptive, chen2022reinforcement}. However, it is challenging to provide performance guarantees from the learned controllers based on these  end-to-end learning approaches.

We bridge the gap between load predictions  and frequency control by designing controllers with provable performance guarantees under time-varying net load. Motivated by recent works on adaptive control~\cite{shi2021meta, o2022neural,huang2022robust}, we think of load forecasting as finding a set of basis functions representing the time-varying patterns of the net load, e.g., functions of wind speed, solar radiation, temperature, past trajectories of load, etc. Then we build an adaptive control law based on (possibly unknown) combination of basis functions.  

The adaptive control law  can be seamlessly integrated with conventional droop control or emerging neural network-based controllers~\cite{cui2022tps, yuan2022reinforcement, jiang2022stable, cui2022structured}, and we prove the frequencies will converge to their nominal value under such compositional controller design. Case studies on IEEE 39-bus and 145-bus power systems demonstrate that the adaptive control approach improves both transient and frequency-restoration performance compared to existing approaches.

\section{Model and Problem Formulation} \label{sec:model}
\subsection{Notations }
Throughout this manuscript, vectors are denoted in lower-case bold and matrices are denoted in upper-case bold, scalars are unbolded unless otherwise specified. Vectors of all ones and zeros are denoted as  $\mathbbold{1}_n, \mathbbold{0}_n \in \real^n$, respectively. For $\bm{\theta} \in \mathbb{R}^{ n}$, $\theta_i$  represent its $i$-th element. Superscript $^*$ indicates the equilibrium value of a variable. 

\subsection{Model }
Consider an $n$-bus power system modeled as a connected graph $\left(\mathcal{V},\mathcal{E} \right)$. Specifically, buses are indexed by $i,j \in \mathcal{V} := \left[n\right]:=\{1,\dots, n\} $ and transmission lines are denoted by unordered pairs $\{i,j\} \in \mathcal{E} \subset \Set*{\{i,j\}}{i,j\in\mathcal{V},i\not=j}$. 
We adopt the commonly used lossless power flow model and assume that all voltages are at $1$  per unit~\cite{sauer2017power}. Then the state variables are   phase angle $\boldsymbol{\theta}:=\left(\theta_i, i \in \left[n\right] \right) \in \real^n$ and frequency deviation from the nominal value $\boldsymbol{\omega}:=\left(\omega_i, i \in \left[n\right] \right) \in \real^n$. Denote $M_i$ and $D_i$ as the inertia constant and damping coefficient at bus $i$. The susceptance of the line $\{i,j\} \in \mathcal{E}$ is $B_{i j}=B_{ji }>0 $, and the susceptance $B_{kl}=B_{lk}=0$ when $\{k, l\} \notin \mathcal{E}$. 
Then, the frequency dynamics is given by the swing equation written as~\cite{sauer2017power}, 
    \begin{align}\label{eq:dyn}
\dot{\theta_i}(t) &=\omega_i(t)\,, \\
 M_{i}\dot{\omega}_{i}(t) &\!=\!p_{i}(t)\!-\! D_i \omega_i(t)\!-\!u_i(t)\!-\!\sum_{j=1}^{n}\!B_{i j}\sin(\theta_i(t)\!-\!\theta_j(t)) \nonumber,
    \end{align}
where $u_i(t)$ is the control action that changes the active power injection, $ p_{i}(t)$ is the net power injection (the negative of net load) at bus $i$, which is time-varying.  

We consider static local feedback controllers: bus $i$ measures its local frequency deviation $\omega_i$ and applies a static function to determine the control action $u_i$.  We envision the control actions coming from inverter-connected resources, which can follow almost arbitrary control law because of their fast-response capabilities.

\subsection{Optimal frequency control}

The goal of frequency control is to keep the frequency around their reference value (e.g., 60Hz in the United States).
Fig.~\ref{fig: freq_period} demonstrates typical frequency dynamics after a step increase in load, and we are interested in optimizing the transient process while realizing frequency restoration:



%

\begin{figure}[ht]
\centering
\includegraphics[width=0.9\columnwidth]{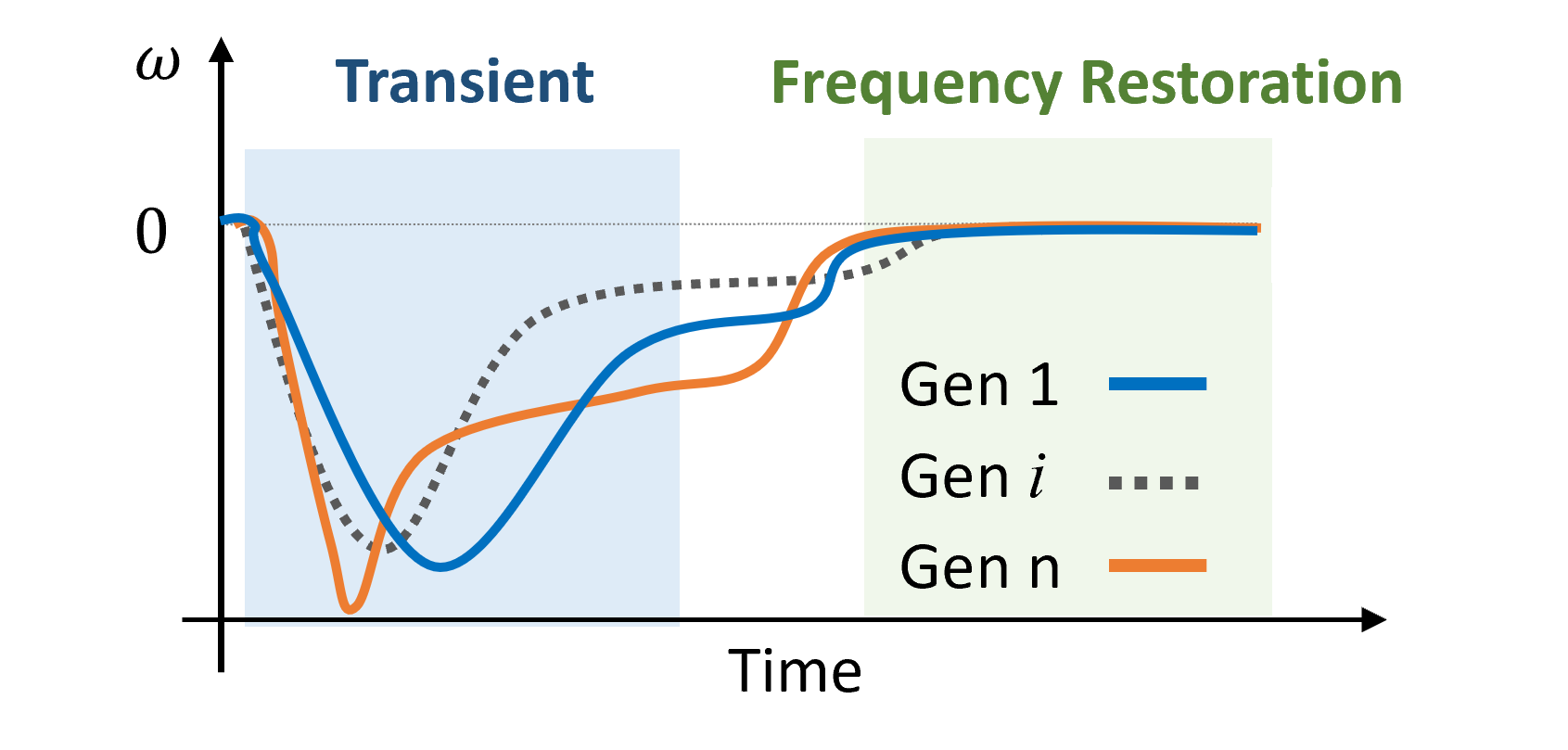}
\caption{ Frequency dynamics after a step increase of load. We are interested in optimizing the transient process while realizing frequency restoration. 
}
\label{fig: freq_period}
\end{figure}

\subsubsection{Frequency restoration} 
We seek to restore the frequency to their reference value after disturbances, i.e., $ \lim_{t\to\infty} \omega_i(t)=0\,, \forall i$. This goal is none trivial because of the time-varying nature of $p_i(t)$. We will later show that under some realistic models of $p_i(t)$, it is possible to realize frequency restoration using an adaptive control law.
\subsubsection{Transient period} 
This is the period after a disturbance and before the system settles down.  During the transient period, the most important goal is to reduce the maximum frequency deviation (also called the frequency nadir), represented by  the infinity norm of $\omega_i(t)$ over the time horizon from 0 to $T$, i.e., $\|\omega_i\|_{\infty}:=\sup_{0\leq t\leq T} |\omega_i(t)|$. We also want to avoid large control efforts, computed as $\int_{t=0}^T C_i(u_i(t))dt$ where $C_i(\cdot)$ is a Lipschitz-continuous cost function for $i\in[n]$.
The transient optimization problem up to time $T$ is~\footnote{The cost function can also include the summation of frequency deviation or other terms, and this does not change the analysis.}
\vspace{-0.1cm}
\begin{subequations}\label{eq: transient_optimization}
\begin{align}
    \min_{\bm{u}(\cdot)}& \sum_{i=1}^n \left(\|\omega_i\|_{\infty}+\gamma \int_{t=0}^TC_i(u_i(t))dt\right)\,, \label{subeq: transient_cost}\\
    \text{s.t. } & 
    \text{dynamics in } \eqref{eq:dyn},
     \label{subeq: opt_dyn}\\
    &\lim_{t\to\infty} \omega_i(t) =0, \forall i\in[n] \label{subeq: opt_stable}
    \end{align}
\end{subequations}
where $\gamma$ is a coefficient that trades off the cost of action
with respect to the frequency deviation. The  requirement of frequency restoration is represented as a constraint in~\eqref{subeq: opt_stable}. 



Even when $p_i(t)$ is time-invariant, the nonlinear dynamics~\eqref{subeq: opt_dyn}, the frequency restoration requirement~\eqref{subeq: opt_stable} and the nonconvex cost functions make it challenging to solve~\eqref{eq: transient_optimization} using conventional optimization techniques. Recent works~\cite{cui2022tps, yuan2022reinforcement, cui2022structured, jiang2022stable} propose to learn functions $\bm{u}(\cdot)$ by parameterizing them as neural networks and train them by minimizing the cost in~\eqref{subeq: transient_cost}. These works derive structural properties of stabilizing controllers, and enforce the structures in neural networks to guarantee~\eqref{subeq: opt_stable} by design. 
However, these analyses rely on the assumption that $p_i(t)$ is time-invariant. The rest of the paper considers what happens when $p_i(t)$ is time-varying.

\section{Leveraging predictions in power systems}
\label{sec:prediction}
This section explicitly represents time-varying net power injections for frequency control. Then, we show the intuitions for developing adaptive  law in controllers. 
\subsection{Modeling time-varying power injections}
We model the time-varying net power injection in~\eqref{eq:dyn}  as 
\begin{equation}\label{eq: p_t}
    p_i(t) = p_i^*+\bm{\phi}_i(t)^\top \bm{a}_i\, ,
\end{equation}
where $p_i^*$ is the setpoint  of net power injection (e.g., the economic dispatch solution) that satisfies DC power flow $p_i^* = \sum_{j=1}^{n}B_{i j}\sin(\theta_i^*\!-\!\theta_j^*) \,\forall i\in[n]$. The second term, $\bm{\phi}_i(t)^\top \bm{a}_i$, captures the variation in time. More specifically, we think $\bm{\phi}_i(t)$ as a vector of basis functions of features and $\bm{a}_i$ as some coefficients (possibly unknown). Note that both $\bm{\phi}$ and $\bm{a}$ are indexed by $i$ and different buses may have a different set of basis functions and coefficients. 


The model in \eqref{eq: p_t} comes from the fact that the net power injections typically depend on a set of time-varying features (or basis functions), including weather, temperature, time of the day, solar irradiation, wind conditions, historical load and others~\cite{kusiak2010short,wang2017data,  sun2019using, lu2021ultra, charytoniuk1998nonparametric}. If $\bm{\phi}_i(t)$ includes these features, then $\bm{\phi}_i(t)^\top \bm{a}_i$ is simply a nonlinear time series-based load forecasting algorithm~\cite{lu2021ultra, charytoniuk1998nonparametric, kusiak2010short,wang2017data, sun2019using}. In addition, $\bm{\phi}_i(t)$ can include different kernel functions that have been used in accurate forecasting algorithms~\cite{shi2021meta, o2022neural,huang2022robust}. Note that we always include the scalar $1$ as part of the feature kernels to capture the base load.  
The coefficients, $\bm{a}$, are the weights assigned to each of the features that make up the variations in the net load. Our approach does not assume that the value of $\bm{a}$ is known, therefore as long as the functional form of \eqref{eq: p_t} is correct, we can design a controller that will adapt to the time-varying deviations. In practice, this means that we can include a large set of features to help fully capture the behavior of the net load. 


\subsection{ Intuition on the adaptive control law}
Since the frequency dynamics of the system in~\eqref{eq:dyn} depends
only on the phase angle differences between different nodes,  we make the change of
coordinates $\delta_i(t):=\theta_i(t)-\frac{1}{n}\sum_{j=1}^{n} \theta_j(t)$ and write the  frequency dynamics in~\eqref{eq:dyn} in the center-of-inertia coordinates, which are more convenient for analysis~\cite{sauer2017power,weitenberg2018robust}:

\vspace{-0.2cm}
\small   \begin{align}\label{eq:dyn_change_cor}
\dot{\delta_i}(t) &=\omega_i(t)-\frac{1}{n}\sum_{j=1}^n\omega_j(t)\,, \\
M_{i}\dot{\omega}_{i}(t) &\!=p_i^*\!+\!\bm{\phi}_i(t)^\top \!\bm{a}_i\!-\! D_i \omega_i(t)\!-\!u_i(t)\!-\!\sum_{j=1}^{n}\!B_{i j}\sin(\delta_{ij}(t)) \nonumber,
    \end{align}
\normalsize
where $\delta_{ij}(t):=\delta_i(t)-\delta_j(t)$ and we plug in the expression of $p_i(t)$ in~\eqref{eq: p_t}.  

Intuitively,  if the controller has the form $u_i(t) = \hat{u}_i(\omega_i(t))+\bm{\phi}_i(t)^\top\bm{a}_i$, it will cancel the impact of net load variations in $\bm{\phi}_i(t)^\top\bm{a}_i$.
However, this is not possible since $\bm{a}_i$ is not known. Inspired by works on adaptive control~\cite{slotine1991applied,o2022neural,shi2021meta}, we design an adaptation law $\bm{\phi}_i(t)^\top\bm{\hat{a}}_i(t)$  where $\dot{\bm{\hat{a}}}_i(t)$ is updated according to observations in the next section.

\section{Modular Design of Adaptive Controllers } \label{sec:module}
In this section, we derive the modular controller design that consists of a base control module and an adaptation law. Then we provide conditions on the control law that guarantees frequency restoration. 



\subsection{Modular adaptive control law}
 
We adopt a modular approach for the controller design: the control law $\hat{u}_i(\omega_i(t))$ for the system with time-invariant net load, and the adaptation law $\bm{\phi}_i(t)^\top\bm{\hat{a}}_i$ 
 for the variation of the net load in time.   Compactly,  the control law is
\begin{subequations}\label{eq:control_adpt}
\begin{align}
& u_i(t) = \underbrace{\hat{u}_i(\omega_i(t))}_{\text{Base controller}}+\;\underbrace{\bm{\phi}_i(t)^\top\bm{\hat{a}}_i(t)}_{\text{Adaptive Law}}\\
&\dot{\bm{\hat{a}}}_i(t)=\omega_i(t)\cdot\bm{A}_i\bm{\phi}_i(t) ,\label{subeq: dot_hat_a}
\end{align}    
\end{subequations}
where $\bm{A}_i\succ 0$ is a tunable matrix and we will specify it later in Section~\ref{subsec: training}.

To specify the base controller, we make the following assumption for the ranges of the angle difference. 
\begin{asp}\label{asp: angle_diff}
$\forall \{i,j\} \in\mathcal{E}$, the bus voltage angle difference is within $\pm \pi/2$, 
i.e., $\forall \{i,j\} \in\mathcal{E}$, $|\delta_i(t)-\delta_j(t)|\in [0,\pi/2)$.
\end{asp}
Since the operation of power systems normally restricts the angle differences to be much smaller than $\pi/2$~\cite{sauer2017power}, this assumption holds in most scenarios. Under Assumption~\ref{asp: angle_diff}, there is a general class of controllers that guarantees the local exponential stability of the system if $p_i(t)$ is time-invariant. We call these controllers as the Base Controller:
\begin{base*}\label{base: control}
Any controller $\hat{u}_i(\omega_i(t)) \,\forall i\in[n]$ that is monotonically increasing and across the origin is  locally exponential stable for the system if $p_i(t)$ is time-invariant~\cite{cui2022tps}. For example, this class of controllers includes 
\begin{enumerate}[label=(\roman*)]
    \item Droop control $\hat{u}_i(\omega_i(t))=\varphi_i\omega_i(t)$ with $\varphi_i>0$. This is the most popular form of controllers implemented in primary frequency control.  
    \item Neural network-based control law $\hat{u}_i(\omega_i(t))=\hat{u}_{\bm{\psi}_i}\left(\omega_i(t)\right)$, where $\hat{u}_{\bm{\psi}_i}(\cdot)$ is the function parameterized by monotone neural networks with $\bm{\psi}_i$ being trainable parameters in~\cite{cui2022tps}. These monotone neural network-based controllers can offer better performances compared to linear ones~\cite{cui2022tps, jiang2022stable}. 
\end{enumerate}
\end{base*}
\begin{rmk}\label{rmk: PLlinear}
If $\phi_i(t)=1$, then~\eqref{subeq: dot_hat_a} reduces to a generic integral control where $\dot{\hat{a}}_i(t)=A_i\omega_i(t)$ and the compositional controller~\eqref{eq:control_adpt} reduces to a proportional-integral (PI) controller~\cite{cui2022structured}. Hence, the adaptation law in~\eqref{subeq: dot_hat_a} can be viewed as an extended version of integral control that makes use of the basis functions in $\bm{\phi}_i(t)$. In the experiment, we will show that the control law in~\eqref{eq:control_adpt} can reduce the frequency oscillations compared to a standard PI controller. 
\end{rmk}

\subsection{Frequency restoration}




 
 The next theorem states conditions on the system with adaptive control law in~\eqref{eq:control_adpt} to guarantee frequency restoration.
 
\begin{thm}\label{thm: convergence}
Suppose Assumption~\ref{asp: angle_diff} holds and the control law is~\eqref{eq:control_adpt} where $\hat{u}_i(\omega_i(t))$ is a monotonically increasing function cross the origin. For the closed-loop system formed by~\eqref{eq:dyn_change_cor} and ~\eqref{eq:control_adpt}, there exists a non-empty set $\mathcal{Q}_\rho$ such that if the initial states satisfying
$\left(\bm{\delta}(0), \bm{\omega}(0),\bm{\hat{a}}(0)\right)\in \mathcal{Q}_\rho $,  the frequency deviation $\bm{\omega}$ of each bus  converges to zero, i.e., $lim_{t\to\infty}\omega_i(t)=0 \, \forall i\in[n]$.
\end{thm}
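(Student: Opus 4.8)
The plan is to prove convergence through a Lyapunov argument in the style of adaptive control, exploiting the fact that the adaptation law~\eqref{subeq: dot_hat_a} is engineered precisely to cancel the unknown-parameter term in the energy rate. First I would introduce the parameter estimation error $\bm{\tilde{a}}_i(t):=\bm{\hat{a}}_i(t)-\bm{a}_i$ and substitute the control law~\eqref{eq:control_adpt} into~\eqref{eq:dyn_change_cor}; this replaces both the disturbance $\bm{\phi}_i(t)^\top\bm{a}_i$ and the adaptive term by the single error term $-\bm{\phi}_i(t)^\top\bm{\tilde{a}}_i(t)$. Using the DC power flow identity $p_i^*=\sum_j B_{ij}\sin(\delta_{ij}^*)$ together with $\hat{u}_i(0)=0$, the point $(\bm{\delta}^*,\bm{0},\bm{a})$ is an equilibrium of the closed loop, and all quantities below are measured relative to it.

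Next I would construct the Lyapunov candidate
\begin{equation*}
V=\tfrac{1}{2}\sum_{i=1}^n M_i\omega_i^2+W(\bm{\delta})+\tfrac{1}{2}\sum_{i=1}^n \bm{\tilde{a}}_i^\top \bm{A}_i^{-1}\bm{\tilde{a}}_i,
\end{equation*}
where the potential energy $W(\bm{\delta})$ is chosen so that $\partial W/\partial\delta_i=\sum_j B_{ij}\sin(\delta_{ij})-p_i^*$, e.g. $W(\bm{\delta})=-\sum_{\{i,j\}\in\mathcal{E}}B_{ij}\cos(\delta_{ij})-\sum_i p_i^*\delta_i$ shifted to vanish at $\bm{\delta}^*$. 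The role of Assumption~\ref{asp: angle_diff} is exactly to make $W$ locally strictly convex: its Hessian is a weighted graph Laplacian with edge weights $B_{ij}\cos(\delta_{ij})$, which is positive definite on the center-of-inertia subspace (where $\sum_i\delta_i=0$) whenever $|\delta_{ij}|<\pi/2$. Since $\bm{A}_i^{-1}\succ0$, the candidate $V$ is then positive definite about the equilibrium, and I would take $\mathcal{Q}_\rho$ to be a sublevel set $\{V\le\rho\}$ small enough to lie inside the region where Assumption~\ref{asp: angle_diff} holds; this is the non-empty set claimed in the statement.

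The core computation is $\dot V$ along trajectories. Differentiating and substituting the dynamics, the kinetic term contributes $\sum_i\omega_i\bigl[p_i^*-\bm{\phi}_i^\top\bm{\tilde{a}}_i-D_i\omega_i-\hat{u}_i(\omega_i)-\sum_j B_{ij}\sin(\delta_{ij})\bigr]$, while the potential term contributes $\sum_i\bigl(\sum_j B_{ij}\sin(\delta_{ij})-p_i^*\bigr)\omega_i$; here the common factor $\tfrac{1}{n}\sum_j\omega_j$ arising from $\dot{\delta}_i$ drops out because $\sum_i\bigl(\sum_j B_{ij}\sin(\delta_{ij})-p_i^*\bigr)=0$ by antisymmetry of the line flows and the balance $\sum_i p_i^*=0$. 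Adding the error term, whose derivative $\dot{\bm{\tilde{a}}}_i=\omega_i\bm{A}_i\bm{\phi}_i$ produces exactly $\sum_i\omega_i\bm{\tilde{a}}_i^\top\bm{\phi}_i$, every coupling through $\bm{\delta}$ and through $\bm{\tilde{a}}$ cancels, leaving
\begin{equation*}
\dot V=-\sum_{i=1}^n D_i\omega_i^2-\sum_{i=1}^n \omega_i\hat{u}_i(\omega_i)\le 0,
\end{equation*}
because $D_i>0$ and $\hat{u}_i$ is monotone through the origin, so $\omega_i\hat{u}_i(\omega_i)\ge0$. This shows $\mathcal{Q}_\rho$ is invariant and all closed-loop states remain bounded.

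Finally I would conclude $\omega_i\to0$, and this is where the main obstacle lies: the closed loop is non-autonomous through $\bm{\phi}_i(t)$, so LaSalle's invariance principle does not apply directly and one cannot argue on a static invariant set. Instead I would invoke Barbalat's lemma: $V$ is bounded below and nonincreasing, hence converges, so $\int_0^\infty\sum_i D_i\omega_i^2\,dt<\infty$ and each $\omega_i\in L_2$. Provided the features $\bm{\phi}_i(t)$ are bounded---which holds for physical quantities such as temperature or irradiance---the boundedness of $\bm{\omega},\bm{\delta},\bm{\tilde{a}}$ on $\mathcal{Q}_\rho$ makes $\dot{\omega}_i$ bounded, so $\omega_i$ is uniformly continuous and Barbalat's lemma yields $\omega_i(t)\to0$ for all $i$. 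I expect verifying this uniform continuity (equivalently, boundedness of $\dot\omega_i$, which rests on $\bm{\phi}_i$ being bounded) to be the only delicate point beyond the energy bookkeeping; note that the parameter estimate $\bm{\hat{a}}_i$ need not converge to $\bm{a}_i$, consistent with the theorem asserting only frequency restoration.
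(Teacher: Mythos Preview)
Your proposal is correct and follows essentially the same route as the paper: the same Lyapunov candidate (kinetic energy plus the Bregman-type potential $W(\bm{\delta})$ plus the weighted parameter-error term), the same cancellation in $\dot V$ yielding $\dot V=-\sum_i D_i\omega_i^2-\sum_i\omega_i\hat{u}_i(\omega_i)\le 0$, and the same sublevel-set construction of $\mathcal{Q}_\rho$. The only difference is packaging: the paper cites Khalil's Invariance-like Theorem (Theorem~8.4) for the non-autonomous convergence step, while you unroll the underlying Barbalat argument directly and, in doing so, make explicit the boundedness assumption on $\bm{\phi}_i(t)$ that the paper leaves implicit in its appeal to that theorem.
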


This theorem 
covers a special case where $\phi_i(t)=1$, and the results align with~\cite{cui2022structured} that shows the PI controller guarantees frequency restoration. Here, we want to show that if the time-varying net load has a richer set of features $\bm{\phi}$, the adaptive control law can still maintain frequency restoration. The region of attraction $\mathcal{Q}_\rho$ is characterized in Proposition~\ref{prop: Invariance-like Theorem}. The subscript $\rho$ describes the size of this region, and we will show how it can be computed. For convenience, we omit $(t)$ for the time-varying variables in the rest of the paper. 

 
The main tool to prove Theorem~\ref{thm: convergence}  is the Invariance-like Theorem~\cite[Theorem~8.4]{khalil2002nonlinear}. We define $\bm{x}:=\left(\bm{\delta}, \bm{\omega},\bm{\hat{a}}\right)\in\real^{2n+\sum_{i=1}^nl_i}$ and write the closed-loop system formed by~\eqref{eq:dyn_change_cor} and~\eqref{eq:control_adpt} as $\dot{\bm{x}}=\bm{f}(t,\bm{x})$. Define the set for the states satisfying Asusmption~\ref{asp: angle_diff} as $\mathcal{X}:=\left\{\left(\bm{\delta}, \bm{\omega},\bm{\hat{a}}\right)| |\delta_i-\delta_j|\in [0,\pi/2)\,\forall \{i,j\} \in\mathcal{E}\right\}$, which contains $\bm{x}^*=(\bm{\delta}^*, \mathbbold{0}_n, \bm{a})$. The Invariance-like Theorem (in the notation of this paper) is:
\begin{prop}[Invariance-like
Theorem~\cite{khalil2002nonlinear}] \label{prop: Invariance-like Theorem}
Consider the system $\dot{\bm{x}}=\bm{f}(t,\bm{x})$ where $\bm{x}\in \real^{2n+\sum_{i=1}^nl_i}$. Let $V:\mathcal{X} \rightarrow R$ be a continuously differentiable functions such that
\begin{align}
W_1(\bm{x}) &\leq V( \bm{x}) \leq W_2(\bm{x}) \\
\dot{V}(\bm{x})& \leq-W_0(\bm{x})    
\end{align}
hold on $\mathcal{X}$, where $W_1(\bm{x})$ and $W_2(\bm{x})$ are continuous positive definite functions and $W_0(\bm{x})$ is a continuous positive semidefinite function on $\mathcal{X}$. Choose $r>0$ such that $\mathcal{B}_r:=\left\{\bm{x}\in \real^{2n+\sum_{i=1}^nl_i}| \|\bm{x}\|<r\right\}\subset \mathcal{X}$ and let $\rho<\min _{\|x\|=r} W_1(\bm{x})$. Then, all solutions of the system with an initial condition $\bm{x}\left(0\right) \in\mathcal{Q}_\rho :=\left\{x \in \mathcal{B}_r \mid W_2(\bm{x}) \leq \rho\right\}$ are bounded and satisfy
$$
\lim _{t \rightarrow \infty} W_0(\bm{x}(t)) \rightarrow 0\, .
$$
\end{prop}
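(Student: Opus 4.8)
The plan is to follow the classical LaSalle--Yoshizawa / Barbalat argument in three stages: first show that $\mathcal{Q}_\rho$ is positively invariant, so that every solution starting in it remains bounded and never leaves $\mathcal{X}$; then integrate the derivative bound to conclude that $W_0$ is integrable along the trajectory; and finally invoke Barbalat's lemma to upgrade this integrability to $\lim_{t\to\infty}W_0(\bm{x}(t))=0$. Throughout I rely on the standing regularity of $\bm{f}(t,\bm{x})$ (piecewise continuous in $t$, locally Lipschitz in $\bm{x}$) that makes solutions exist and be unique.

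For the invariance step I would use that $\dot{V}(\bm{x})\leq-W_0(\bm{x})\leq 0$ holds on $\mathcal{X}$, so $V$ is non-increasing along a solution for as long as that solution remains in $\mathcal{X}$. The delicate point is that $\mathcal{X}$ is a proper subset of $\real^{2n+\sum_{i=1}^n l_i}$ (the region where Assumption~\ref{asp: angle_diff} holds), so I must first rule out escape from $\mathcal{X}$ before the bound on $\dot V$ can be applied globally in time. I would do this by a first-exit-time contradiction: set $t_1:=\inf\{t\geq 0:\|\bm{x}(t)\|=r\}$ and suppose $t_1<\infty$. On $[0,t_1)$ the solution lies in $\mathcal{B}_r\subset\mathcal{X}$, so $\dot V\leq 0$ gives $V(\bm{x}(t))\leq V(\bm{x}(0))\leq W_2(\bm{x}(0))\leq\rho$, and by continuity $V(\bm{x}(t_1))\leq\rho$. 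But $\|\bm{x}(t_1)\|=r$ forces $V(\bm{x}(t_1))\geq W_1(\bm{x}(t_1))\geq\min_{\|x\|=r}W_1(\bm{x})>\rho$, a contradiction. Hence the solution never reaches the sphere, stays in $\mathcal{B}_r$ for all time, is therefore bounded, remains in $\mathcal{X}$, and exists on $[0,\infty)$.

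With the solution confined to the compact set $\overline{\mathcal{B}_r}$, I would integrate $\dot V(\bm{x}(\tau))\leq-W_0(\bm{x}(\tau))$ from $0$ to $t$, obtaining $\int_0^t W_0(\bm{x}(\tau))\,d\tau\leq V(\bm{x}(0))-V(\bm{x}(t))\leq V(\bm{x}(0))$, where the last inequality uses $V\geq W_1\geq 0$. Letting $t\to\infty$ shows the nonnegative quantity $\int_0^\infty W_0(\bm{x}(\tau))\,d\tau$ is finite. Barbalat's lemma then yields $\lim_{t\to\infty}W_0(\bm{x}(t))=0$, \emph{provided} the map $t\mapsto W_0(\bm{x}(t))$ is uniformly continuous.

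I expect this uniform-continuity requirement to be the main obstacle, since it is the only place where the explicit time-dependence of $\bm{f}$ matters. Because the trajectory lives in the compact set $\overline{\mathcal{B}_r}$ and $W_0$ is continuous there, $W_0$ is uniformly continuous \emph{as a function of $\bm{x}$}; it therefore suffices to show $\bm{x}(t)$ is uniformly continuous in $t$, which follows once $\dot{\bm{x}}(t)=\bm{f}(t,\bm{x}(t))$ is bounded on $[0,\infty)$. Establishing that bound is where care is needed: $\bm{f}$ is a continuous function of the (bounded) states but also depends explicitly on time through the feature vector $\bm{\phi}_i(t)$. I would therefore close the argument by using that $\bm{\phi}_i(t)$ is bounded on $[0,\infty)$ (standard for the invariance-like theorem and true for realistic forecasting features); then $\bm{f}$ is bounded on $[0,\infty)\times\overline{\mathcal{B}_r}$, $\bm{x}(t)$ is Lipschitz in $t$, the composition $W_0(\bm{x}(t))$ is uniformly continuous, and Barbalat's lemma delivers the stated conclusion.
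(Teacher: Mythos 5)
Your proof is correct and takes the same route as the source this proposition rests on: the first-exit-time contradiction for positive invariance and boundedness, integration of $\dot V \le -W_0$ along the trajectory, and Barbalat's lemma via uniform continuity is exactly the classical argument for this result. Note that the paper itself gives no proof at all---the proposition is quoted verbatim as \cite[Theorem~8.4]{khalil2002nonlinear} and used as a black box to prove Theorem~\ref{thm: convergence}---and your one extra step, requiring $\bm{f}$ to be bounded along the trajectory uniformly in $t$ (here, boundedness of the feature signals $\bm{\phi}_i(t)$), correctly supplies a regularity hypothesis that is present in Khalil's formulation (local Lipschitzness uniformly in $t$, together with $\bm{f}(t,\bm{0})$ bounded) but silently omitted from the paper's restatement.
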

To use the above result, we construct a function $V(\bm{\delta}, \bm{\omega},\bm{\hat{a}})$ that  is bounded by continuous positive definite functions and  satisfying {$\dot{V}(\bm{\delta}, \bm{\omega},\bm{\hat{a}})\leq -\sum_{i=1}^{n} D_{i}\omega_i^{2}$. Then the convergence of $-\sum_{i=1}^{n} D_{i}\omega_i^{2}$ to zero follows directly from Proposition~\ref{prop: Invariance-like Theorem}. The rest of this section outlines the proof of Theorem~\ref{thm: convergence}.

We use the following energy function
\begin{align}\label{eq: Lyapunov_adaptive}
V(\bm{\delta}, \bm{\omega},\bm{\hat{a}})=\hat{V}(\bm{\delta}, \bm{\omega})\!+\!\frac{1}{2}\sum_{i=1}^n \left( \bm{\hat{a}}_i\!-\!\bm{a}_i\right)^\top\! \bm{A}_i^{-1}\!\left( \bm{\hat{a}}_i\!-\!\bm{a}_i\right)
\end{align}
where
$\hat{V}(\bm{\delta}, \bm{\omega})=: \frac{1}{2}\sum_{i=1}^{n} M_i\omega_{i}^{2}+W_\mathrm{p}(\bm{\delta})$
with
\begin{equation}
\begin{aligned}
    W_\mathrm{p}(\bm{\delta}) := 
&-\frac{1}{2}\sum_{i=1}^{n} \sum_{j=1}^{n} B_{ij}\left( \cos(\delta_{ij}) -\cos(\delta_{ij}^*)\right)\nonumber\\
&-\sum_{i=1}^{n}\sum_{j=1}^{n} B_{ij} \sin(\delta_{ij}^*)( \delta_{i}-\delta_{i}^*)\,.\label{eq:Wp}
\end{aligned}
\end{equation}






The next Lemma shows how $V$ can be bounded:
\begin{lem}[Bounds of the energy function]\label{lem: PD_Lyap}
For all $(\bm{\delta}, \bm{\omega},\bm{\hat{a}})\in\mathcal{X}$ , the energy function $V(\bm{\delta}, \bm{\omega},\bm{\hat{a}})$ in~\eqref{eq: Lyapunov_adaptive} satisfies 
$$W_1(\bm{\delta}, \bm{\omega},\bm{\hat{a}}) \leq V( \bm{\delta}, \bm{\omega},\bm{\hat{a}}) \leq W_2(\bm{\delta}, \bm{\omega},\bm{\hat{a}}) $$
with 
$$
\begin{aligned}
&W_1(\bm{\delta}, \bm{\omega},\bm{\hat{a}}) :=\gamma_{1}(\|\bm{\delta}-\bm{\delta}^*\|_2^{2}+\|\bm{\omega}\|_2^{2}+\| \bm{\hat{a}}\!-\!\bm{a}\|_2^2)\,,\\ 
& W_2(\bm{\delta}, \bm{\omega},\bm{\hat{a}}) := \gamma_{2}(\|\bm{\delta}-\bm{\delta}^*\|_2^{2}+\|\bm{\omega}\|_2^{2}+\| \bm{\hat{a}}\!-\!\bm{a}\|_2^2)\,,
\end{aligned}
$$
for some constants $\gamma_{1}>0$ and $\gamma_{2}>0$.
\end{lem}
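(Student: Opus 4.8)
The plan is to exploit the fact that $V$ in \eqref{eq: Lyapunov_adaptive} is separable into three blocks — a kinetic term in $\bm\omega$, the potential term $W_\mathrm{p}(\bm\delta)$, and the parameter-error term in $\bm{\hat a}-\bm a$ — and to sandwich each block individually by a positive multiple of the corresponding squared norm. Summing the three one-sided bounds then yields $W_1\le V\le W_2$ with the common constants $\gamma_1,\gamma_2$.

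The two quadratic blocks are immediate. The kinetic term is a diagonal form, so $\tfrac12 M_{\min}\|\bm\omega\|_2^2\le\tfrac12\sum_i M_i\omega_i^2\le\tfrac12 M_{\max}\|\bm\omega\|_2^2$ with $M_{\min}=\min_i M_i>0$ and $M_{\max}=\max_i M_i$. Since each $\bm A_i\succ0$ gives $\bm A_i^{-1}\succ0$, the parameter-error term obeys $\tfrac12\underline\lambda\|\bm{\hat a}-\bm a\|_2^2\le\tfrac12\sum_i(\bm{\hat a}_i-\bm a_i)^\top\bm A_i^{-1}(\bm{\hat a}_i-\bm a_i)\le\tfrac12\overline\lambda\|\bm{\hat a}-\bm a\|_2^2$, where $\underline\lambda=\min_i\lambda_{\min}(\bm A_i^{-1})>0$ and $\overline\lambda=\max_i\lambda_{\max}(\bm A_i^{-1})$.

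The real work is the potential term, and the first step is to recognize $W_\mathrm{p}$ as the Bregman divergence of $f(\bm\delta):=-\tfrac12\sum_i\sum_j B_{ij}\cos\delta_{ij}$ between $\bm\delta$ and $\bm\delta^*$. A direct differentiation gives $\nabla f(\bm\delta)_k=\sum_j B_{kj}\sin\delta_{kj}$, so that $W_\mathrm{p}(\bm\delta)=f(\bm\delta)-f(\bm\delta^*)-\nabla f(\bm\delta^*)^\top(\bm\delta-\bm\delta^*)$, and in particular $W_\mathrm{p}(\bm\delta^*)=0$, $\nabla W_\mathrm{p}(\bm\delta^*)=\bm0$. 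Taylor's theorem with integral remainder then yields $W_\mathrm{p}(\bm\delta)=\int_0^1(1-s)(\bm\delta-\bm\delta^*)^\top\nabla^2 f(\bm\delta_s)(\bm\delta-\bm\delta^*)\,ds$ with $\bm\delta_s:=\bm\delta^*+s(\bm\delta-\bm\delta^*)$, where $\nabla^2 f(\bm\delta)$ is the weighted graph Laplacian $\bm L(\bm\delta)$ with edge weights $B_{ij}\cos\delta_{ij}$. Under Assumption~\ref{asp: angle_diff} every weight is strictly positive, so $\bm L$ is positive semidefinite with kernel exactly $\mathrm{span}(\mathbbold{1}_n)$ because the graph is connected.

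The subtlety to flag is that $W_\mathrm{p}$ is invariant under the uniform shift $\bm\delta\mapsto\bm\delta+c\mathbbold{1}_n$ (the linear term is shift-invariant because $\sum_i p_i^*=0$ for the lossless network), so it cannot be positive definite on all of $\real^n$; the quadratic lower bound holds only on the center-of-inertia subspace $\{\mathbbold{1}_n^\top\bm\delta=0\}$, which is exactly where the trajectories of \eqref{eq:dyn_change_cor} live and on which both $\bm\delta$ and $\bm\delta^*$ sit, and on which $\bm L$ is positive definite. The remaining issue is uniformity of the eigenvalue bounds over $\mathcal X$, since $\cos\delta_{ij}\to0$ as an angle difference approaches $\pi/2$. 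I would resolve this by noting that connectivity makes the admissible $\bm\delta$ precompact, and that $W_\mathrm{p}$ extends continuously and stays strictly positive on the closure away from $\bm\delta^*$ (being the Bregman divergence of a function strictly convex near the interior point $\bm\delta^*$); hence the ratio $W_\mathrm{p}(\bm\delta)/\|\bm\delta-\bm\delta^*\|_2^2$ has a positive infimum $c_1$ and a finite supremum $c_2$. Setting $\gamma_1=\min\{\tfrac12 M_{\min},\tfrac12\underline\lambda,c_1\}$ and $\gamma_2=\max\{\tfrac12 M_{\max},\tfrac12\overline\lambda,c_2\}$ then closes the sandwich. I expect this last uniformity-and-boundary argument to be the main obstacle; everything else is bookkeeping.
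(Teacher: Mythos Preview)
Your proposal is correct and follows essentially the same route as the paper: decompose $V$ into the three blocks, handle the kinetic and parameter-error terms by elementary eigenvalue bounds, and recognize $W_\mathrm{p}$ as the Bregman divergence of $S(\bm\delta)=-\tfrac12\sum_{i,j}B_{ij}\cos\delta_{ij}$, whose Hessian is the weighted Laplacian with positive edge weights under Assumption~\ref{asp: angle_diff}. The paper obtains the quadratic sandwich on $W_\mathrm{p}$ by citing \cite[Lemma~4]{weitenberg2018exponential}, whereas you supply a self-contained Taylor-remainder plus compactness argument; you are also more careful than the paper in flagging that the Laplacian has kernel $\mathrm{span}(\mathbbold{1}_n)$, so the lower bound is valid only on the center-of-inertia subspace where the dynamics \eqref{eq:dyn_change_cor} evolve, and in using $\lambda_{\min}(\bm A_i^{-1}),\lambda_{\max}(\bm A_i^{-1})$ rather than $\lambda_{\min}(\bm A_i),\lambda_{\max}(\bm A_i)$ for the $\bm A_i^{-1}$-weighted block.
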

\begin{proof}
We start by bounding the term $W_\mathrm{p}(\bm{\delta}) $. Define  $S(\bm{\delta}):=-\frac{1}{2}\sum_{i=1}^{N} \sum_{j=1}^{N} B_{ij} \cos(\delta_{ij}) $, then $\nabla S(\bm{\delta}) =  \left([\nabla S(\bm{\delta})]_i:=\sum_{j=1}^{n} B_{ij} \sin(\delta_{ij}), i \in \left[n\right] \right) \in \real^n$ and $\nabla^2 S(\bm{\delta}):=\left(\left[ \nabla^2 S(\bm{\delta})\right]_{i,j}=\frac{\partial^2 S(\bm{\delta})}{\partial \delta_i \delta_j}, i,j\in[n]\right)\in \real^{n \times n}$ where
\begin{align*}\label{eq:Lij}
     \frac{\partial^2 S(\bm{\delta})}{\partial \delta_i \delta_j}=
    \begin{dcases}
    -B_{ij}\cos(\delta_{ij}) & \text{if}\ i\neq j\\
    \sum_{j^\prime=1,j^\prime\neq i}^n B_{ij\prime}\cos(\delta_{ij\prime})& \text{if}\ i=j
    \end{dcases}\,,\forall i,j \in \left[n\right]\,.
\end{align*}
For $(\bm{\delta}, \bm{\omega},\bm{\hat{a}})\in\mathcal{X}$, $\nabla^2 S(\bm{\delta})$ is the Laplacian matrix that satisfying  $\nabla^2 S(\bm{\delta})\succ 0$  and thus $S(\bm{\delta})$ is convex. Note that
\begin{equation}\label{eq:Wp_bregman}
W_\mathrm{p}(\bm{\delta})=S(\bm{\delta})-S(\bm{\delta}^*)-\mathbf\nabla S(\bm{\delta}^*)^\top\left(\bm{\delta}-\bm{\delta}^{*}\right)    
\end{equation}
is the Bregman distance of the convex function $S(\bm{\delta})$ at the point $\bm{\delta}^*$.  According to~\cite[Lemma~4]{weitenberg2018exponential}, ~\eqref{eq:Wp_bregman}  can be bounded by $\beta_{1}\left\|\bm{\delta}-\bm{\delta}^{\ast}\right\|_2^2 \leq W_\mathrm{p}(\bm{\delta}) \leq \beta_{2}\left\|\bm{\delta}-\bm{\delta}^{\ast}\right\|_2^2 $  for $\bm{\delta}$  satisfying Assumption~\ref{asp: angle_diff} and some constants $\beta_{1},\beta_{2}>0$.

Using the Rayleigh-Ritz theorem~\cite{Horn2012MA}, the term  $\frac{1}{2}\sum_{i=1}^{n} M_i\omega_{i}^{2}$ is lower bounded by $\frac{1}{2}\left(\min_i M_i\right) \|\bm{\omega}\|_2^{2}$ and is upper bounded by $\frac{1}{2}\left(\max_i M_i\right) \|\bm{\omega}\|_2^{2}$. The term $\frac{1}{2}\sum_{i=1}^n \left( \bm{\hat{a}}_i-\bm{a}_i\right)^\top \bm{A}_i^{-1}\left( \bm{\hat{a}}_i-\bm{a}_i\right)$ is lower bounded by $\frac{1}{2}\left(\min_i \lambda_{min}(\bm{A}_i) \right)\| \bm{\hat{a}}\!-\!\bm{a}\|_2^2$ and upper bounded by $\frac{1}{2}\left(\max_i \lambda_{max}(\bm{A}_i) \right)\| \bm{\hat{a}}\!-\!\bm{a}\|_2^2$. Therefore, we can bound the energy function in~\eqref{eq: Lyapunov_adaptive} with
 \begin{align*}
\gamma_{1} &:=\dfrac{1}{2}\min \left(\min_i M_i, \, 2\beta_{1},\,\min_i \lambda_{min}(\bm{A}_i)  \right)>0\,, \\
\gamma_{2} &:=\dfrac{1}{2}\max \left(\max_i M_i,\, 2\beta_{2},\,\max_i \lambda_{max}(\bm{A}_i) \right)>0\,.
\end{align*}
\end{proof}

The next lemma shows an important inequality associated with the time derivative of the energy function and the positive definite function $W_0(\bm{\delta}, \bm{\omega},\bm{\hat{a}}) :=\sum_{i=1}^{n} D_{i}\omega_i^{2}$.

\begin{lem}[Time derivative of the energy function]\label{lem: dot_V}
The time derivative of the energy function satify 
$$ \dot{V}(\bm{\delta}, \bm{\omega},\bm{\hat{a}})\leq -W_0(\bm{\delta}, \bm{\omega},\bm{\hat{a}}) :=-\sum_{i=1}^{n} D_{i}\omega_i^{2}.
 $$ 
\end{lem}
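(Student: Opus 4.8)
The plan is to differentiate $V$ along the closed-loop trajectories generated by~\eqref{eq:dyn_change_cor} and~\eqref{eq:control_adpt} and to show that every cross-term cancels, leaving only the dissipation from damping and from the base controller. I would split $\dot V$ according to the three summands of $V$: the kinetic energy $\tfrac12\sum_i M_i\omega_i^2$, the potential $W_\mathrm{p}(\bm\delta)$, and the parameter-error term $\tfrac12\sum_i(\hat{\bm a}_i-\bm a_i)^\top\bm A_i^{-1}(\hat{\bm a}_i-\bm a_i)$, then recombine.

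For the kinetic part, $\tfrac{d}{dt}\bigl(\tfrac12\sum_i M_i\omega_i^2\bigr)=\sum_i\omega_i M_i\dot\omega_i$; substituting the swing equation in~\eqref{eq:dyn_change_cor} together with $u_i=\hat u_i(\omega_i)+\bm\phi_i^\top\hat{\bm a}_i$ yields $M_i\dot\omega_i=p_i^*-D_i\omega_i-\hat u_i(\omega_i)-\bm\phi_i^\top(\hat{\bm a}_i-\bm a_i)-\sum_j B_{ij}\sin(\delta_{ij})$, so that the unknown $\bm a_i$ enters only through the estimation error $\hat{\bm a}_i-\bm a_i$. For the potential part, I would exploit the Bregman-distance representation~\eqref{eq:Wp_bregman} established in Lemma~\ref{lem: PD_Lyap}, giving $\tfrac{d}{dt}W_\mathrm{p}(\bm\delta)=\bigl(\nabla S(\bm\delta)-\nabla S(\bm\delta^*)\bigr)^\top\dot{\bm\delta}$, where $[\nabla S(\bm\delta)]_i=\sum_j B_{ij}\sin(\delta_{ij})$ and $[\nabla S(\bm\delta^*)]_i=p_i^*$ by the DC-power-flow setpoint condition. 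Plugging in $\dot\delta_i=\omega_i-\tfrac1n\sum_j\omega_j$, the center-of-inertia averaging term vanishes because $\sum_i\bigl([\nabla S(\bm\delta)]_i-p_i^*\bigr)=0$ follows from the antisymmetry $B_{ij}\sin(\delta_{ij})=-B_{ji}\sin(\delta_{ji})$ (and likewise $\sum_i p_i^*=0$), leaving $\tfrac{d}{dt}W_\mathrm{p}(\bm\delta)=\sum_i\bigl(\sum_j B_{ij}\sin(\delta_{ij})-p_i^*\bigr)\omega_i$.

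For the parameter-error part, since $\bm a_i$ is constant its derivative equals $\sum_i(\hat{\bm a}_i-\bm a_i)^\top\bm A_i^{-1}\dot{\hat{\bm a}}_i$; inserting the adaptation law~\eqref{subeq: dot_hat_a}, $\dot{\hat{\bm a}}_i=\omega_i\bm A_i\bm\phi_i$, makes $\bm A_i^{-1}\bm A_i$ collapse and gives $\sum_i\omega_i\bm\phi_i^\top(\hat{\bm a}_i-\bm a_i)$. Adding the three contributions, the pairs $\pm\sum_i\omega_i p_i^*$, $\pm\sum_i\omega_i\sum_j B_{ij}\sin(\delta_{ij})$, and $\pm\sum_i\omega_i\bm\phi_i^\top(\hat{\bm a}_i-\bm a_i)$ each cancel, leaving $\dot V=-\sum_i D_i\omega_i^2-\sum_i\omega_i\hat u_i(\omega_i)$. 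Because $\hat u_i$ is monotonically increasing and crosses the origin, $\omega_i\hat u_i(\omega_i)\ge 0$ for all $\omega_i$, so the last sum is nonnegative and $\dot V\le-\sum_i D_i\omega_i^2=-W_0$, as claimed.

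The crux is arranging the exact cancellation of the estimation-error cross-term $\sum_i\omega_i\bm\phi_i^\top(\hat{\bm a}_i-\bm a_i)$: this is precisely why the gain $\bm A_i$ is placed in the adaptation law~\eqref{subeq: dot_hat_a} and why $\bm A_i^{-1}$ weights the parameter-error term in $V$, so the two match and the unknown $\bm a_i$ is eliminated from $\dot V$ without ever being identified. A secondary point requiring care is verifying that the averaging term introduced by the center-of-inertia change of coordinates drops out, which hinges on the antisymmetry of the line flows over the graph.
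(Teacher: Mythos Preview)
Your proposal is correct and follows essentially the same route as the paper: compute $\dot V$ along~\eqref{eq:dyn_change_cor}--\eqref{eq:control_adpt} via the Bregman form of $W_\mathrm{p}$, use the antisymmetry $\mathbbold{1}_n^\top\nabla S(\bm\delta)=0$ to kill the center-of-inertia averaging term, let the adaptation law cancel the $\sum_i\omega_i\bm\phi_i^\top(\hat{\bm a}_i-\bm a_i)$ cross-term, and conclude from $\omega_i\hat u_i(\omega_i)\ge 0$. The only cosmetic difference is that the paper organizes the computation around the partial gradients $\nabla_{\bm\delta}V,\nabla_{\bm\omega}V,\nabla_{\hat{\bm a}_i}V$ rather than your three-summand split, but the steps and cancellations are identical.
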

\begin{proof}
We start by computing the partial derivatives of $V(\bm{\delta}, \bm{\omega},\bm{\hat{a}})$ with respect to each state. From ~\eqref{eq: Lyapunov_adaptive} and~\eqref{eq:Wp_bregman} , we have
\begin{equation}\label{eq: Lyapunov_derivative_partial}
\begin{split}
   \nabla_{\bm{\delta}}V &=\nabla S(\bm{\delta})-\nabla S(\bm{\delta}^*) \, ,\\
    \nabla_{\bm{\omega}}V &=\bm{M}\bm{\omega}\, ,\\
    \nabla_{\bm{\hat{a}}_i}V &=\bm{A}_i^{-1}\left( \bm{\hat{a}}_i-\bm{a}_i\right) \,\forall i\in[n].
\end{split}    
\end{equation}


The time derivative of $V(\bm{\delta}, \bm{\omega})$ defined in~\eqref{eq: Lyapunov_adaptive} is given by 
\begin{align}
&\dot{V}(\bm{\delta}, \bm{\omega},\bm{\hat{a}}) \label{eq: Vdot_convergence} \nonumber\\
=& \left(\nabla_{\bm{\delta}}V\right)^\top \dot{\bm{\delta}}+\left(\nabla_{\bm{\omega}}V\right)^\top \dot{\bm{\omega}}+\sum_{i=1}^n\left(\nabla_{\bm{\hat{a}}_i}V \right)^\top \dot{\bm{\hat{a}}}_i \nonumber
\\
\stackrel{\circled{1}}{=}& \left(\nabla S(\bm{\delta})-\nabla S(\bm{\delta}^*)\right)^\top\left(\bm{\omega}-\mathbbold{1}_n\frac{\mathbbold{1}_n^\top\bm{\omega}}{n}\right)\\
& \!+\!\sum_{i=1}^n \omega_i\left(p_i^*-\bm{\phi}_i(t)^\top\Tilde{\bm{a}}_i\!-\!D_i\omega_i\!-\!\hat{u}_i(\omega_i)\!-\![\nabla S(\bm{\delta})]_i\right) \nonumber\\
& - \!\sum_{i=1}^n  \omega_i\underbrace{\left(p_i^*-[\nabla S(\bm{\delta}^*)]_i\right)}_{=\bm{0}}
+\sum_{i=1}^n\omega_i\cdot\bm{\Tilde{a}}_i^\top \bm{A}_i^{-1}\bm{A}_i\bm{\phi}_i(t) \nonumber\\
=&\!-\!\sum_{i=1}^{n}\! D_{i}\omega_i^{2}\!-\!\sum_{i=1}^{n}\!\omega_i\hat{u}_i(\omega_i)\!-\!\left(\nabla \!S(\bm{\delta})\!-\!\nabla \!S(\bm{\delta}^*\!)\right)^\top\!\!\left(\mathbbold{1}_n\!\frac{\mathbbold{1}_n^\top\bm{\omega}}{n}\right) \nonumber\\
\stackrel{\circled{2}}{=}&  -\sum_{i=1}^{n}\! D_{i}\omega_i^{2}\!-\!\sum_{i=1}^{n}\!\omega_i\hat{u}_i(\omega_i) \nonumber \\
\stackrel{\circled{3}}{\leq} &-\sum_{i=1}^{n} D_{i}\omega_i^{2}, \nonumber
\end{align}
where $\bm{\Tilde{a}}_i:=\bm{\hat{a}}_i-\bm{a}_i$. The equality $\circled{1}$ plugs in~\eqref{eq: Lyapunov_derivative_partial} and adds the extra term using $p_i^* = \sum_{j=1}^{n}B_{i j}\sin(\delta_i^*\!-\!\delta_j^*) \equiv [\nabla S(\bm{\delta}^*)]_i \,\forall i\in[n]$.  Since $B_{ij}\sin(\delta_i-\delta_j)=-B_{ji}\sin(\delta_j-\delta_i)\, \forall i,j$, we have $\sum_{i=1}^n \sum_{j=1}^{n}B_{i j}\sin(\delta_i\!-\!\delta_j)=0$. Then the equality $\circled{2}$ follows from the property that $\mathbbold{1}_n^\top \nabla S(\bm{\delta}) = \sum_{i=1}^n \sum_{j=1}^{n}B_{i j}\sin(\delta_i\!-\!\delta_j)=0 $ for all $\bm{\delta}\in\real^n$. Because $\hat{u}_i(\omega_i)$ is monotonically increasing with $\omega_i$, we have $\hat{u}_i(\omega_i)\omega_i\geq 0\, \forall i $ and thus the inequality $\circled{3}$ holds.
\end{proof}

 By Proposition~\ref{prop: Invariance-like Theorem}, the bound of the energy function in Lemma~\ref{lem: PD_Lyap} and  $\dot{V}_1(\bm{\delta}, \bm{\omega},\bm{\hat{a}})\leq -\sum_{i=1}^{n} D_{i}\omega_i^{2}$ in Lemma~\ref{lem: dot_V} imply that $\sum_{i=1}^{n} D_{i}\omega_i^{2}\to 0$ as $t\to\infty$ for initial states satisfying $\left(\bm{\delta}(0), \bm{\omega}(0),\bm{\hat{a}}(0)\right)\in \mathcal{Q}_\rho $. Since $\sum_{i=1}^{n} D_{i}\omega_i^{2}= 0$ if and only if $\omega_i=0\, \forall i\in[n]$, we complete the proof of Theorem~\ref{thm: convergence}.

\subsection{Training}\label{subsec: training}
The matrix $\bm{A}_i\succ 0$ and the base control law $\hat{u}_i(\omega_i)$ should be chosen to optimize the behavior of the system. For $\bm{A}_i$'s, similar to how LQR controllers are tuned, we parameterize them as diagonal matrices and optimize these entries. For $\hat{u}_i(\cdot)$,  we adopt the learning-based method to parameterize $\hat{u}_i(\cdot)$ as a monotone neural network.  Note that this construction of $\hat{u}_i(\cdot)$ and $\bm{A}_i$  satisfies the conditions in  Theorem~\ref{thm: convergence},  then the frequency restoration always holds  by design. We optimize  $\hat{u}_i(\cdot)$ and $\bm{A}_i$ by training it with the loss function the same as  the optimization objective~\eqref{subeq: transient_cost}. Detailed formulation of the monotone neural network and the training framework can be found in our previous work~\cite{cui2022tps}.



\section{Case Study} \label{sec:simulation}
Case studies are conducted on the IEEE New England 10-machine 39-bus (NE39) and the IEEE 50-machine 145-bus test systems~\cite{athay1979practical,chow1992toolbox}.  We use TensorFlow 2.0 framework  in Google Colab with a single Nvidia Tesla T4 GPU with 16GB memory to train $\bm{A}_i$ and $\hat{u}_i(\cdot)$ for all $i\in[n]$.

\subsection{IEEE 39-bus test system.}
We conduct experiments on the IEEE-NE39 power network with parameters given in~\cite{athay1979practical, cui2022tps}.  For training and testing the controller, we generate  300 trajectories by randomly picking at most three buses to experience a step load change uniformly distributed in $\text{uniform}[-1,1]\,\text{p.u.}$, where 1p.u.=100 MW is the base unit of power. 
We discretize time and use $k$ to index the time step, with an interval of $0.01$s between each step.  The number of time steps in a trajectory in the training set is $K=400$, i.e., each trajectory spans the time horizon of $4$s. 
 For simplicity and illustrative purposes, we use sinusoidal signals to represent time-varying oscillations of the net load and demonstrate the performance of the proposed method. More advanced load forecasting algorithms (e.g., see ~\cite{charytoniuk1998nonparametric, wang2017data,lu2021ultra}
 and references within) can also be used without changes to the algorithm. The basis $\phi_i(k) = (sin(\eta_i^1 k),  sin(\eta_i^2 k), 1)$, where $\eta_i^1,\eta_i^2\sim \text{uniform}[0.005\pi, 0.02\pi]$. The coefficients $[\bm{a}_{i}]_j\sim \text{uniform}[0.1,0.2]$ for each item $j$ in $\bm{a}$. The training epoch number is 600. The loss function is~\eqref{subeq: transient_cost} where $\gamma=0.1$, $C_i(u_i) = c_i (u_i)^2$ with $c_i\sim \text{uniform}[0.025,0.075]$.

We compare the performance of 1) NN-Adaptive: the proposed adaptive controller~\eqref{eq:control_adpt} where $\hat{u}_i(\cdot)$ is parameterized by the monotone neural network with 20 neurons in the hidden layer, 2) NN-Integral: a standard integral controller (see Remark~\ref{rmk: PLlinear}) is used without considering the other features in $\bm{\phi}$, and $\hat{u}_i(\cdot)$ is the same as NN-Adaptive - parameterized by the monotone neural network with 20 neurons in the hidden layer; 3) Linear Droop: Conventional linear droop control with $\hat{u}_i(\omega_i) = \varphi_i\omega_i$ for all $i\in[n]$. The values of $\varphi_i$ are optimized through learning. 

The average batch loss  during epochs of training is shown in Fig.~\ref{fig:loss_n10}. All three methods converge, with the linear droop control having the highest loss. 
Fig.~\ref{fig:cost_n10} compares the transient and frequency-restoration cost on the test set. Specifically, the transient cost is~\eqref{subeq: transient_cost} for $T=4$s.  The frequency-restoration cost is the average frequency deviation in 10s-15s after the step change, where the dynamics approximately enter the steady state after $t=10$s as we will show later in the simulation. The transient cost of NN-Adaptive is 1.5\% lower than  NN-Integral and 67\% lower than Linear Droop, respectively.  The frequency-restoration cost of NN-Adaptive is 60\% lower than  NN-Integral and 94\% lower than Linear Droop, respectively. 
Hence, linear droop control has significantly higher transient and frequency-restoration costs than others.

\begin{figure}[ht]
\centering
\includegraphics[width=0.8\columnwidth]{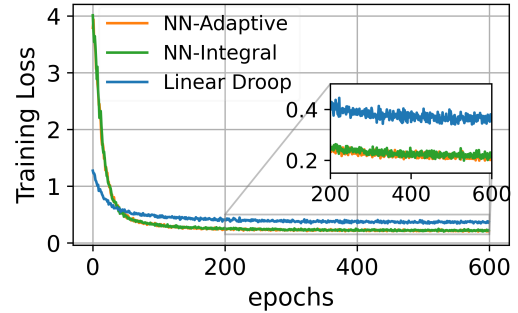}
\caption{ Average batch loss along epochs for IEEE-NE39 test case.  All converge,  with NN-Adaptive and NN-Integral achieving much lower loss than Linear Droop. The training loss only reflects transient cost but cannot reflect the performance of frequency restoration.
}
\label{fig:loss_n10}
\end{figure}

\begin{figure}[ht]
\centering
\includegraphics[width=0.8\columnwidth]{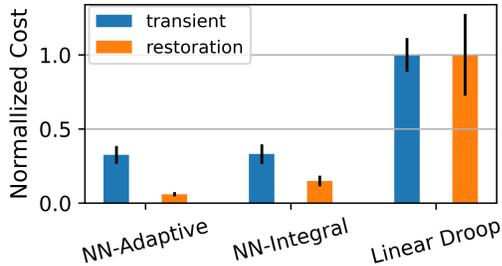}
\caption{ The average transient cost and frequency-restoration cost with error bar
on the randomly generated  test set with size 300. NN-Adaptive achieves  the lowest transient and frequency-restoration cost. 
}
\label{fig:cost_n10}
\end{figure}

It is noteworthy that the training loss (in Fig.~\ref{fig:loss_n10}) as well as the transient cost on the test set (in Fig.~\ref{fig:cost_n10})  for NN-Adaptive and NN-Integral is similar. But the frequency-restoration cost of NN-Adaptive on the test set is \emph{significantly lower} than that of NN-Integral.  The reason is that training only covers a finite length of trajectories where the transient cost is dominant, and it is difficult to quantify how long the trajectory should be to improve frequency-restoration performances. Therefore, the adaptive control approach provides an easy and efficient way to enforce frequency restoration by design, which also does not degrade transient performances. 

\begin{figure}[ht]
\centering
\subfloat[NN-Adaptive
]{\includegraphics[width=3.4
in]{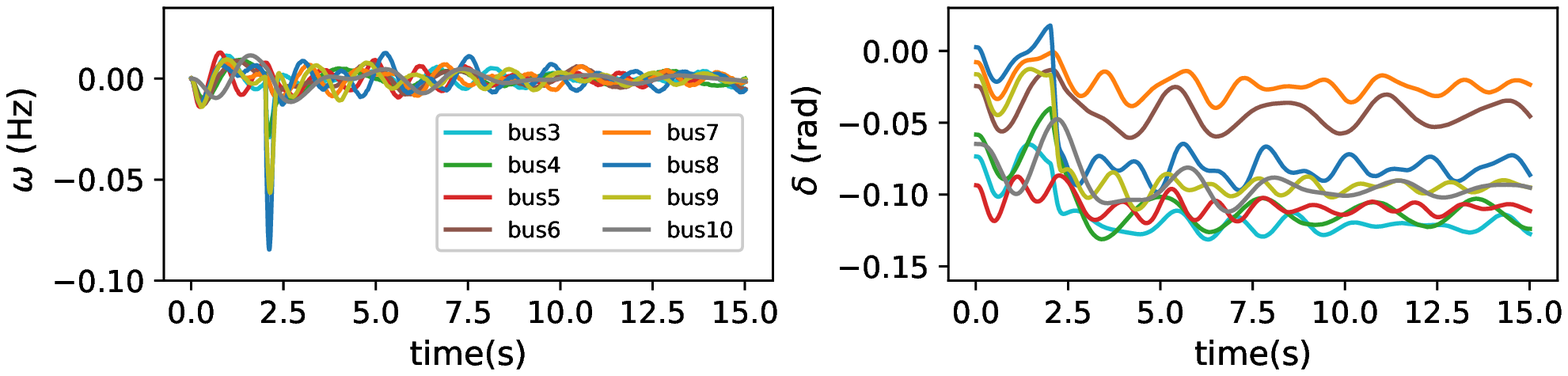}%
}
\hfil
\subfloat[NN-Integral
]{\includegraphics[width=3.4in]{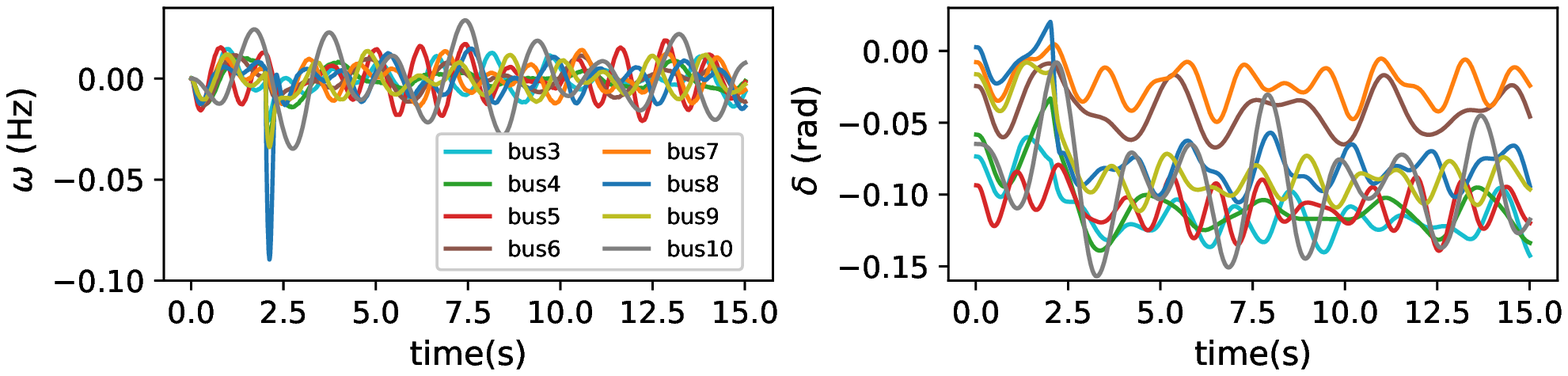}%
}
\hfil
\subfloat[Linear Droop
]{\includegraphics[width=3.4in]{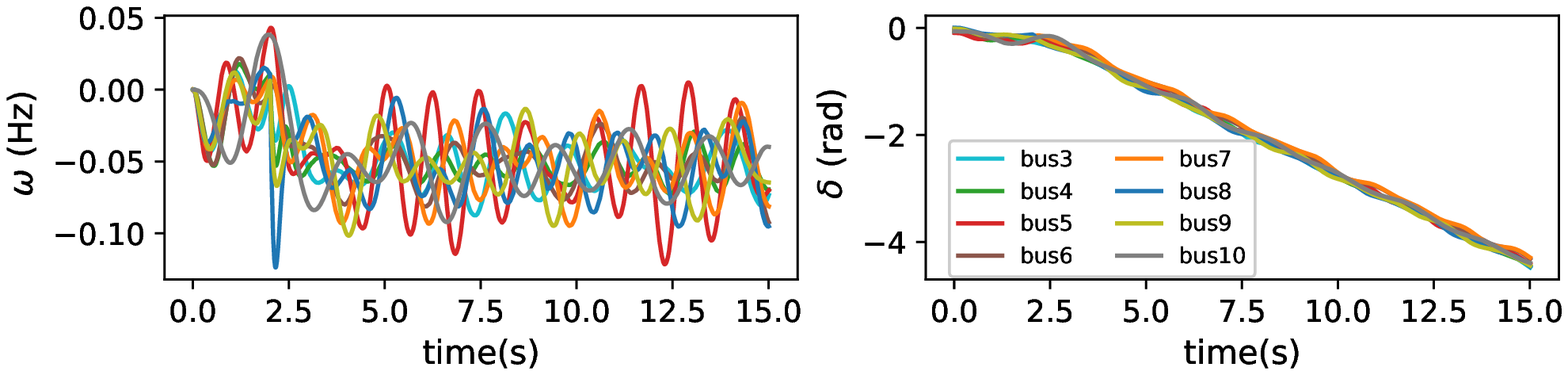}%
}
\caption{Dynamics of the frequency deviation $\omega$ and angle $\delta$ on 8 nodes with   a step load change at 2s. (a) NN-Adaptive achieves fast restoration of frequency after disturbances and the lowest oscillations. (b)  NN-Integral has higher oscillations. (c) Linear droop control cannot restore the frequency to its nominal value and has significantly larger oscillations. }
\label{fig:Dynamic_n10}
\end{figure}

To demonstrate the performance of different controllers, we show the dynamics of the frequency deviation $\bm{\omega}$ and angle $\bm{\delta}$ after  a step increase of load at the time of 2s in Fig.~\ref{fig:Dynamic_n10}. As guaranteed in Theorem~\ref{thm: convergence}, $\bm{\omega}$ of NN-Adaptive in Fig.~\ref{fig:Dynamic_n10}(a) converges to approximately zero after the step load change. NN-Adaptive in Fig.~\ref{fig:Dynamic_n10}(b) also pulls back the frequency deviation to the level close to zero, but it has much larger oscillations compared with NN-Adaptive. By contrast, linear droop control in Fig.~\ref{fig:Dynamic_n10}(c) cannot recover frequency at the nominal value and experiences significantly large  oscillations.   
Therefore, the adaptive control law can greatly improve both transient and frequency-restoration performances.

\subsection{IEEE 50-machine, 145-bus test case  }
To validate the performance of the proposed method in larger systems, we further conduct case studies on IEEE 145-bus, 50-machine
dynamic test case~\cite{chow1992toolbox}.  The parameter setting for the neural network controller and the training process is the same as the 39-bus system. Apart from the step change of load, here we add noise uniformly distributed in $\text{uniform}[-0.03,0.03]\,\text{p.u.}$ (where 1p.u.=100 MW) on the net power injections $p_i(t), i\in[n]$ 
to account for the time-varying deviations that are not covered in $\bm{\phi}_i(t)^\top \bm{a}_i$.

The average batch loss  during epochs of training is shown in Fig.~\ref{fig:loss_n50} and  the transient \& frequency-restoration cost on the test set is shown in
Fig.~\ref{fig:cost_n50}. Similar to the observations in IEEE 39-bus test system, NN-Adaptive has the lowest transient and frequency-restoration cost.  The transient cost of NN-Adaptive is 8\% lower than  NN-Integral and 69\% lower than Linear Droop, respectively.  The frequency-restoration cost of NN-Adaptive is 49\% lower than  NN-Integral and 88\% lower than Linear Droop, respectively. 
To visualize the performance improvement, we show the dynamics of the frequency deviation $\bm{\omega}$ and angle $\bm{\delta}$ after  a step increase of load at the time of 2s in Fig.~\ref{fig:Dynamic_n50}. Despite the existence of noise, $\bm{\omega}$ of NN-Adaptive in Fig.~\ref{fig:Dynamic_n50}(a) converges to approximately zero. NN-Integral in Fig.~\ref{fig:Dynamic_n50}(b) and linear droop control in Fig.~\ref{fig:Dynamic_n50}(c) experience much larger oscillations compared with NN-Adaptive. Moreover, droop control also cannot restore the frequency to its nominal value. Therefore, conventional droop control may not be enough to achieve satisfactory frequency response when there exists more frequent fluctuations of net load.

\begin{figure}[ht]
\centering
\includegraphics[width=0.8\columnwidth]{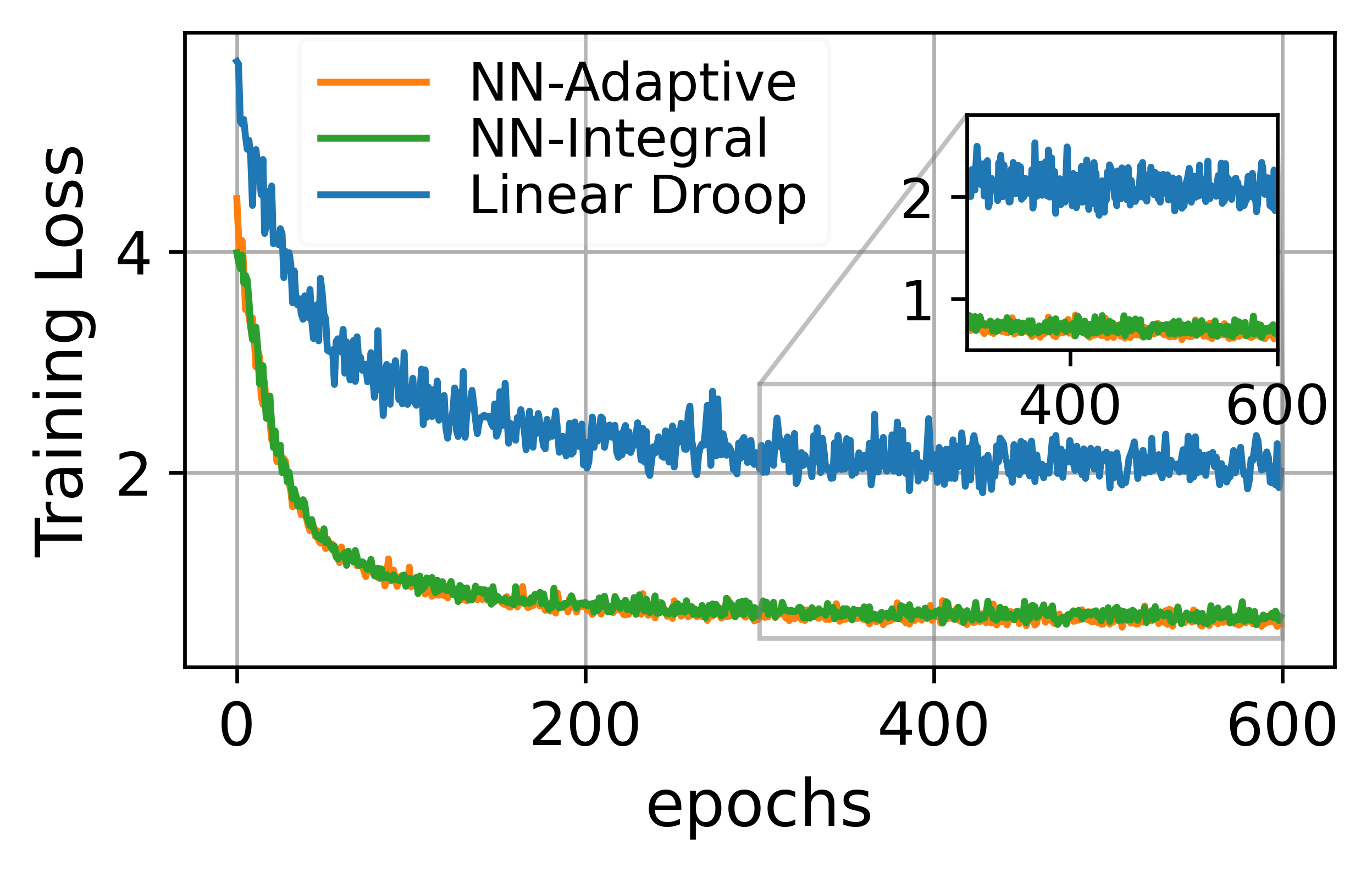}
\caption{ Average batch loss along epochs  for IEEE 145-bus test case. All converge,  with NN-Adaptive and NN-Integral achieving much lower loss than Linear Droop. The training loss only reflects transient cost but cannot reflect the performance of frequency restoration.
}
\label{fig:loss_n50}
\end{figure}

\begin{figure}[ht]
\centering
\includegraphics[width=0.8\columnwidth]{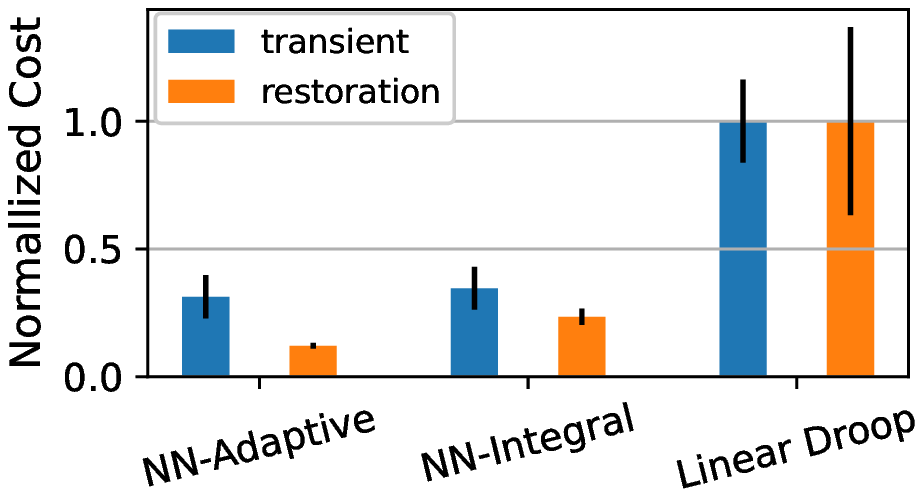}
\caption{ The average transient cost and frequency-restoration cost with error bar
on the randomly generated  test set with size 300. NN-Adaptive achieves  the lowest transient and frequency-restoration cost. 
}
\label{fig:cost_n50}
\end{figure}

\begin{figure}[ht]
\centering
\subfloat[NN-Adaptive
]{\includegraphics[width=3.4
in]{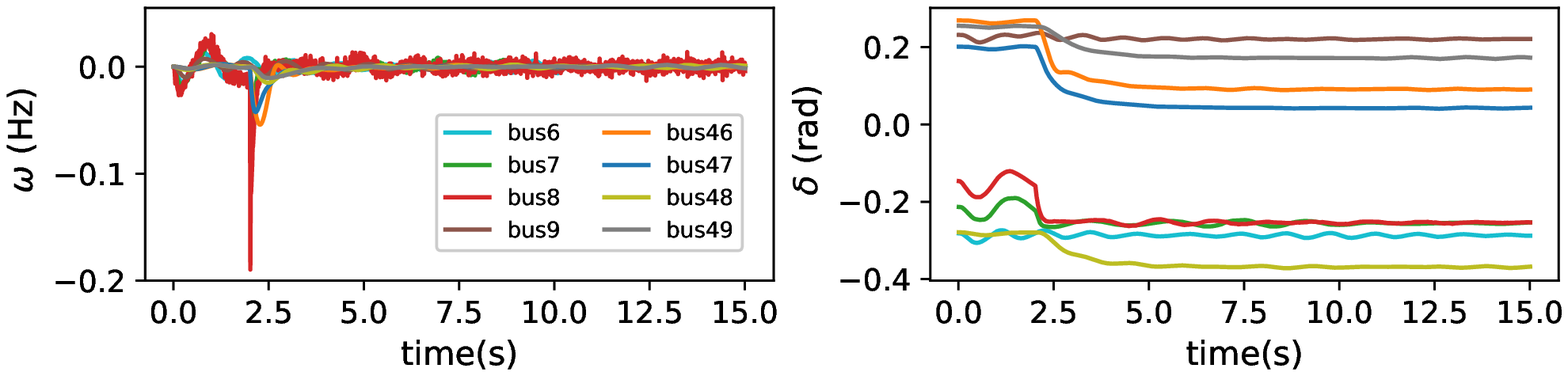}%
}
\hfil
\subfloat[NN-Integral
]{\includegraphics[width=3.4in]{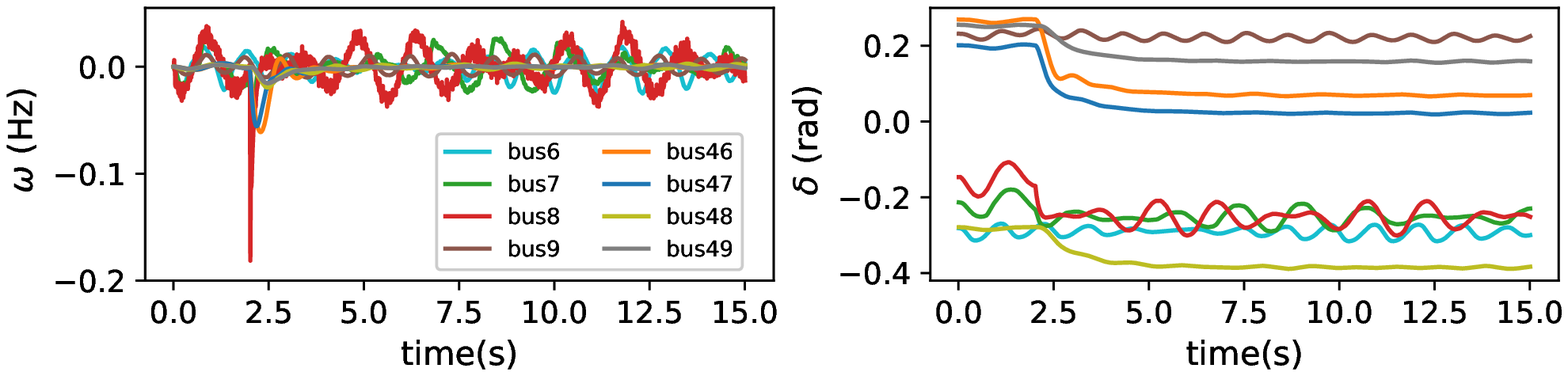}%
}
\hfil
\subfloat[Linear Droop
]{\includegraphics[width=3.4in]{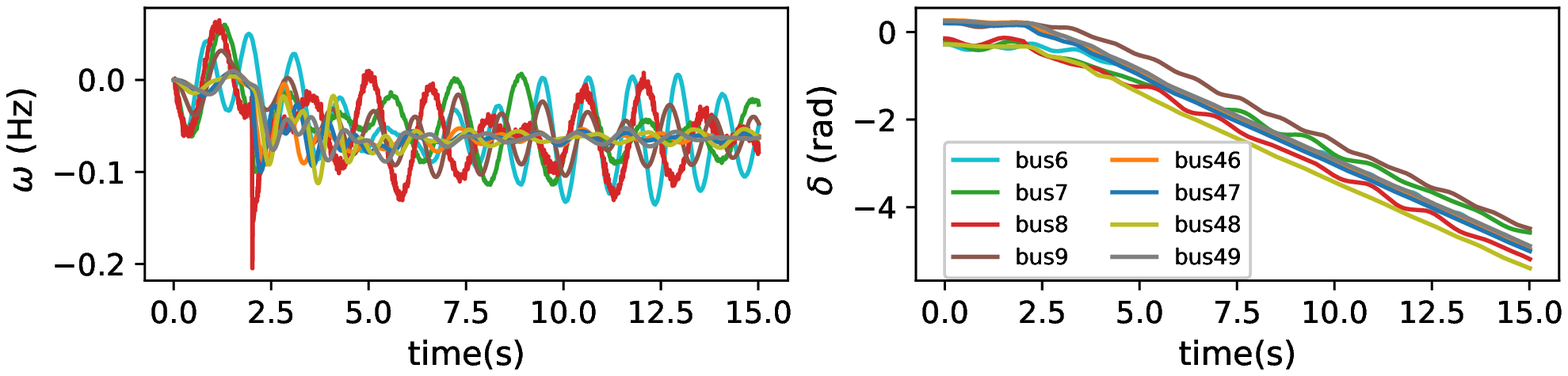}%
}
\caption{Dynamics of the frequency deviation $\omega$ and angle $\delta$ on 8 nodes with   a step load change at 2s. (a) NN-Adaptive achieves fast restoration of frequency after disturbances and the lowest oscillations. (b)  NN-Integral has higher oscillations. (c) Linear droop control cannot restore frequency to its nominal value and has large  oscillations. }
\label{fig:Dynamic_n50}
\end{figure}

\section{Conclusion} \label{sec:conclusion} 
This paper proposes an adaptive approach that can be built on  conventional droop control or emerging neural network-based controllers to efficiently adapt to time-varying net load through predictions. We prove that frequencies will converge to their nominal value under the compositional adaptive controller design. Case studies on IEEE 39-bus and 145-bus systems demonstrate  that the proposed method  improves both transient and frequency-restoration performances compared with existing
approaches, while conventional droop control is not able to remove large oscillations and cannot always drive frequency to its nominal value.  Future directions include rigorous analysis of the performance guarantee subject to noises and prediction errors, and case studies using predictions from real system data.

\bibliographystyle{IEEEtran}
\bibliography{Reference}
\end{document}